\documentclass[letterpaper, 10 pt, conference]{ieeeconf}
\IEEEoverridecommandlockouts
\usepackage{cite}
\usepackage[compact]{titlesec}
\usepackage{amsmath,amssymb,amsfonts}
\usepackage{hyperref}
\usepackage{graphicx}
\usepackage{textcomp}
\usepackage{times}
\usepackage{mathrsfs}
\usepackage{amstext} % Lots of math symbols and environments
\usepackage{algorithm}
\usepackage{algorithmic}
\usepackage{setspace} % For line spacing adjustments
\usepackage{amssymb} % For \qedsymbol
\usepackage{mathtools} 
\usepackage{verbatim}
\usepackage{booktabs} % Required for toprule, midrule, bottomrule
\usepackage[font=footnotesize]{caption}
\usepackage{subcaption}
\usepackage{color}
\renewcommand{\theenumi}{\alph{enumi}}

\newcommand{\real}{\mathbb{R}}

\newcommand{\Ac}{\mathcal{A}}

\newcommand{\Hc}{\mathcal{H}}

\newcommand{\Kc}{\mathcal{K}}

\newcommand{\Nc}{\mathcal{N}}

\newcommand{\Sc}{\mathcal{S}}

\newcommand{\Uc}{\mathcal{U}}
\newcommand{\Vc}{\mathcal{V}}

\newcommand{\Xc}{\mathcal{X}}

\newcommand{\rank}[1]{\operatorname{rank}(#1)}

\newcommand{\longthmtitle}[1]{\mbox{}{\textit{(#1):}}}

\newcommand{\oprocendsymbol}{\hbox{$\square$}}
\newcommand{\oprocend}{\relax\ifmmode\else\unskip\hfill\fi\oprocendsymbol}

\newtheorem{theorem}{Theorem}[section]

\newtheorem{assumption}[theorem]{Assumption}
\newtheorem{lemma}[theorem]{Lemma}

\newtheorem{remark}[theorem]{Remark}

\newtheorem{problem}[theorem]{Problem}

\renewcommand{\baselinestretch}{0.97}
\graphicspath{{fig/}}

\usepackage{multirow}  % Include this package for multi-row functionality



\newcommand{\marginJC}[1]{\marginpar{\color{red}\tiny\ttfamily JC: #1}}
\newcommand{\marginDS}[1]{\marginpar{\color{blue}\tiny\ttfamily DS: #1}}

\allowdisplaybreaks

\def\BibTeX{{\rm B\kern-.05em{\sc i\kern-.025em b}\kern-.08em
    T\kern-.1667em\lower.7ex\hbox{E}\kern-.125emX}}
\begin{document}
\title{\LARGE Koopman Subspace Pruning in Reproducing Kernel Hilbert
  Spaces via Principal Vectors\thanks{This work was supported by AFOSR
    Award FA9550-23-1-0740.}}

\author{Dhruv Shah \quad Jorge Cort\'es \thanks{D. Shah and
    J. Cort\'es are with Department of Mechanical and Aerospace
    Engineering, UC San Diego, USA, {\tt\small
      \{dhshah,cortes\}@ucsd.edu}}}

\maketitle
\thispagestyle{empty}

\begin{abstract}%
  Data-driven approximations of the infinite-dimensional Koopman
  operator rely on finite-dimensional projections, where the
  predictive accuracy of the resulting models hinges heavily on the
  invariance of the chosen subspace. Subspace pruning systematically
  discards geometrically misaligned directions to enhance this
  invariance proximity, which formally corresponds to the largest
  principal angle between the subspace and its image under the
  operator.  Yet, existing techniques are largely restricted to
  Euclidean settings. To bridge this gap, this paper presents an
  approach for computing principal angles and vectors to enable
  Koopman subspace pruning within a Reproducing Kernel Hilbert Space
  (RKHS) geometry. We first outline an exact computational routine,
  which is subsequently scaled for large datasets using randomized
  Nyström approximations. Based on these foundations, we introduce the
  Kernel-SPV and Approximate Kernel-SPV algorithms for targeted
  subspace refinement via principal vectors. Simulation results
  validate our approach.
\end{abstract}%
%\marginJC{Abstract is a bit dry, in that it gets directly to the paper
%  subject matter/contributions. We're missing 1-2 sentences to provide
%  context: e.g., why do we even care about computing principal angles?
%  Why does one prune? What role does that play in building
%  Koopman-based models of unknown dynamical systems?}
%

%% FOR FINAL VERSION
%\begin{IEEEkeywords}
%  Stability of nonlinear systems, neural feedback loops, Robust
%  control
%\end{IEEEkeywords}

\section{Introduction}
% Introduction and motivation

The Koopman operator \cite{AM-YS-IM:20} provides a powerful framework for analyzing complex nonlinear dynamical systems by representing the dynamics as an infinite-dimensional linear operator on function spaces. A
standard approach to approximating the Koopman
operator involves projecting it onto a finite-dimensional subspace
spanned by a specified dictionary of observables. Extended Dynamic
Mode Decomposition (EDMD) \cite{MOW-IGK-CWR:15} is a prominent
data-driven technique that relies on this strategy. However, the
fidelity of the resulting approximation (and the accuracy of the
extracted eigenfunctions) is highly sensitive to the choice of this
dictionary, often necessitating extensive trial-and-error and
parameter tuning. To alleviate this limitation, Kernel EDMD
\cite{MOW-CWR-IGK:14} was introduced to perform the projection with
respect to the inner product of a Reproducing Kernel Hilbert Space
(RKHS). Within this framework, the dictionary is implicitly defined by
kernel sections evaluated at the sampled data points. This inherently
data-driven parameterization naturally aligns the subspace with the
underlying dynamics, significantly enhancing the approximation quality
while reducing the hyperparameter tuning exclusively to the selection
of the kernel function. Supported by recently established error bounds
\cite{FK-FMP-MS-AS-KW:25}, Kernel EDMD has emerged as a highly
effective and popular paradigm for constructing Koopman-based
approximate dynamical models.

Despite its theoretical advantages, the primary limitation of exact
Kernel EDMD is its $\mathcal{O}(N^3)$ computational complexity, where
$N$ denotes the number of trajectory snapshots. This cubic scaling
renders the method computationally prohibitive for massive datasets
and large-scale dynamical systems. To circumvent this bottleneck,
recent literature has proposed scalable estimators utilizing
randomized kernel approximations
\cite{GM-AC-VK-PN-MP-LR:23,FN-SK:23}. These techniques have
demonstrated significant reductions in computational complexity while
maintaining strong approximation performance, thus enabling the
application of Kernel EDMD to larger datasets and more complex
systems.

A critical consideration when approximating the Koopman operator on a
finite-dimensional subspace is the degree to which the subspace is
invariant under the operator, a property formally termed invariance
proximity \cite{MH-JC:26-auto,MH-JC:24-csl-arxiv-revised}.  Because
practical implementations inherently rely on finite-dimensional
representations, minimizing this invariance proximity is essential for
accurate Koopman-based predictions. To address this, subspace pruning
methods \cite{MH-JC:25-access,DS-JC:26-cdc1} enhance the invariance
proximity of an initial dictionary by iteratively discarding
geometrically misaligned directions. This refinement substantially
improves the quality of the resulting Koopman-based approximation,
yielding more accurate eigenfunctions and superior predictive
performance. However, these pruning techniques have only been
developed for the standard EDMD framework with respect to the
empirical $L_2$ inner product. Extending subspace pruning to the
Kernel EDMD setting presents a compelling opportunity to unify the
strengths of both approaches: the rich, data-dependent dictionaries
inherent to RKHS and the enhanced approximation accuracy achieved
through targeted subspace refinement. Beyond improving invariance,
subspace pruning inherently serves as a powerful model reduction
technique for Kernel EDMD. Such dimensionality reduction is crucial
for synthesizing tractable controllers based on the lifted state
dynamics \cite{RS-KW-IM-JB-MS-FA:26}.

Finally, the subspace pruning approach proposed here is closely
related to an alternative line of research that deals with computing
Koopman residuals, Residual Dynamic Mode Decomposition
(ResDMD)~\cite{MJC-AT:24}. This framework has also been extended to
the RKHS setting by computing residuals with respect to the Koopman
dual \cite{MJK:24,NB-MJC-GC:25}. Building on these computational
tools, and independently from the present
manuscript,~\cite{GC-NB-JCL-SB-MJC:26} has recently introduced the
Principal Angle Decomposition (PAD) algorithm, that orders and retains
small-angle principal observables to construct reduced, more accurate
Koopman models.

\noindent \textbf{Statement of Contributions:} The overarching
contribution of this paper is the development of a rigorous, data-driven framework
for computing principal angles and vectors to enable Koopman subspace
pruning within a RKHS. Specifically, this contribution is structured
in three intertwined steps. First, we derive an exact RKHS
computational routine to compute the principal angles and vectors
between a chosen subspace and its Koopman image strictly with respect
to the RKHS inner product. Second, to overcome the computational
bottlenecks of the exact formulation, we propose a scalable Nystr\"om
approximation that yields a highly efficient and tractable algorithm
for large datasets.  Finally, we integrate these computational tools
into the Kernel-SPV and Approximate Kernel-SPV algorithms for targeted
subspace refinement via principal vectors. Simulation results
demonstrate empirical convergence of the Nystr\"om approximation to
the true principal arguments and showcase the effectiveness of our pruning
approach in improving quality of the approximated eigenfunctions.

%\marginJC{I think we should have a sentence where we say that these
%  results, combined/integrated with subspace prunning strategies,% give
%rise to Kernel-SPV and Approximate Kernel-SPV -- otherwise, it's %ugly
%to have those terms in the abstract but nowhere in the intro.}
%\marginDS{Resolved}

\section{Preliminaries}

We introduce\footnote{Let $\real^n$ denote the $n$-dimensional
  Euclidean space and $\mathcal{X} \subseteq \real^n$ the state
  space. For a matrix $M \in \real^{m \times n}$, we denote its range
  (column space) by $\mathcal{R}(M)$, its rank by $\rank{M}$, and its
  Moore-Penrose pseudoinverse by $M^\dagger$. For a Hilbert space
  $\Hc$ equipped with an inner product
  $\langle \cdot, \cdot \rangle_{\Hc}$ and induced norm
  $\|\cdot\|_{\Hc}$, subspaces of observables are denoted by
  calligraphic letters (e.g., $\Sc, \Vc$). The linear span and the
  dimension of a subspace are denoted by $\text{span}(\cdot)$ and
  $\text{dim}(\cdot)$, respectively. Throughout this paper, the terms
  ``function'' and ``vector'' are used interchangeably to refer to
  elements of the RKHS, reflecting their practical representation as
  linear combinations of empirical kernel sections. It will be clear
  from the context whether we are referring to a function in the RKHS
  or its coefficient vector representation. The precise algebraic
  formulation of operations within the RKHS (including inner products
  and linear combinations) is detailed in
  Section~\ref{subsec:RKHS_prelim}.}  here the concepts and
mathematical notation utilized throughout the paper. We begin by
reviewing the Koopman operator framework and its data-driven
approximation via orthogonal projection. Next, we define principal
angles and vectors to rigorously quantify the invariance proximity of
subspaces. Finally, we outline the geometry of the RKHS and formalize
the kernel EDMD procedure.

\subsection{The Koopman Operator and its Approximation}\label{sec:koopman-EDMD}
We consider a discrete-time dynamical
system on the state space $\mathcal{X} \subseteq \mathbb{R}^n$
described by a map $T: \mathcal{X} \to \mathcal{X}$:
\begin{equation}\label{eqn:sys_dynamics}
  x^+ = T(x), \quad x \in \mathcal{X}.
\end{equation}
The Koopman operator~\cite{AM-YS-IM:20}
$\mathcal{K}: \mathcal{F} \to \mathcal{F}$ is an infinite-dimensional
linear operator acting on a space of real-valued observables
$\mathcal{F} \ni \psi: \mathcal{X} \to \mathbb{R}$ as
$(\mathcal{K}\psi)(x) = \psi(T(x))$.  We assume that the function
space $\mathcal{F}$ is closed under composition with $T$. The space of
observables $\mathcal{F}$ is equipped with an inner product
$\langle \cdot, \cdot \rangle_{\mathcal{F}}$ and the associated norm
$\|\cdot\|_{\mathcal{F}}$. Let $\mathcal{S} \subset \mathcal{F}$ be a
subspace spanned by a finite set of linearly independent functions
$\Psi = \{\psi_1, \dots, \psi_s\}$ (the dictionary).

The goal of EDMD is to find a finite-dimensional operator
$K: \mathcal{S} \to \mathcal{S}$ that best represents the action of
$\mathcal{K}$ when restricted to $\mathcal{S}$. This is achieved by orthogonally projecting the image of the subspace back onto itself. Formally, let $P_{\mathcal{S}}: \mathcal{F} \to \mathcal{S}$ denote
the orthogonal projection operator onto $\mathcal{S}$. The EDMD
approximation is given by 
\begin{equation}\label{eqn:EDMD_approx}
K_{\text{EDMD}} \triangleq P_{\mathcal{S}} \mathcal{K}|_{\mathcal{S}}.  
\end{equation}
 
This projection interpretation of EDMD reveals a critical limitation:
if the subspace $\mathcal{S}$ is not invariant under $\mathcal{K}$
(i.e., $\mathcal{K}\mathcal{S} \not\subseteq \mathcal{S}$), the
projection $P_{\mathcal{S}}$ discards the component of the dynamics
that evolves orthogonal to $\mathcal{S}$, leading to error.

\subsection{Principal Angles and Vectors}\label{sec:principal-angles}
To rigorously quantify the alignment between the chosen subspace and
its evolution under the Koopman operator, we introduce the definitions
and properties of principal angles and vectors~\cite{AB-GHG:73}.

Given a Hilbert space \((\mathcal{H},\langle\cdot,\cdot\rangle)\), let
\(\mathcal{U},\mathcal{V}\subset\mathcal{H}\) be subspaces with
\(\dim(\mathcal{U})=d_1\) and \(\dim(\mathcal{V})=d_2\).  The
\emph{principal angles}
$0 \leq \theta_1 \leq \cdots \leq \theta_k \leq \frac{\pi}{2}$ between
$\mathcal{U}$ and $\mathcal{V}$, where $k = \min\{d_1, d_2\}$, are
defined recursively as follows:
\begin{align*}
  \cos \theta_j = \max_{u \in \mathcal{U},\, v \in \mathcal{V}} \;
  &
    |\langle
    u,
    v
    \rangle|
  \\ 
  \text{subject to} \quad & \|u\| = \|v\| = 1,
  \\
  & \langle u, u_i \rangle = 0, \; \langle v, v_i \rangle = 0, \,\,
    i = 1, \ldots, j-1, 
\end{align*}
where $u_i, v_i$ are the principal vectors corresponding to the
previous $(j-1)$ angles. The vectors $(u_j, v_j)$ achieving the
maximum are called the $j$-th pair of \emph{principal vectors}.

The principal vectors $\{u_j\}_{j=1}^k$ and $\{v_j\}_{j=1}^k$ are
orthonormal, i.e.,
\begin{equation*}  
  \langle u_i,u_j\rangle=\delta_{ij},\quad
  \langle v_i,v_j\rangle=\delta_{ij},\quad i,j=1,\dots,k.
\end{equation*}
Furthermore, the principal vectors can be extended to bases of their
subspaces. For instance, if $k=\dim(\Uc)$, then $\{u_j\}_{j=1}^k$ is
an orthonormal basis of $\Uc$, and $\{v_j\}_{j=1}^k$ can be extended
to an orthonormal basis of $\Vc$.  When working in the Euclidean
setting i.e., $\mathcal{H} = \real^n$, we can compute the principal
angles and vectors via the SVD of the
matrix formed by the inner products of the basis vectors of the two
subspaces, cf.~\cite[Theorem 1]{AB-GHG:73} for details.

\subsection{Invariance Proximity}\label{subsec:invariance_proximity}
To ensure accuracy of the Koopman approximation via EDMD, the chosen
finite-dimensional subspace $\mathcal{S}$ should be as close to
invariant as possible. The notion of invariance
proximity~\cite{MH-JC:26-auto,MH-JC:24-csl-arxiv-revised} quantifies how close a subspace is to
being invariant under the Koopman operator by measuring the largest
principal angle between the subspace and its image under the operator.
Formally, the \emph{invariance proximity} of a subspace $\mathcal{S}$
with respect to the operator $\mathcal{K}$ is defined by
\begin{equation}
  \delta(\mathcal{S}) \triangleq \sin \theta_{\max}(\mathcal{S},
  \mathcal{K}\mathcal{S}),
\end{equation}
where
$\mathcal{K}\mathcal{S} = \text{span}\{\mathcal{K}\phi \mid \phi \in
\mathcal{S}\}$ denotes the image of the subspace under the operator
and $\{\theta_i \}_{i = 1}^s \subset [0, \pi/2]$ are the principal
angles between $\mathcal{S}$ and $\mathcal{K}\mathcal{S}$.

Clearly, smaller invariance proximity $\delta(\mathcal{S})$ indicates
that the subspace $\mathcal{S}$ is closer to being invariant under
$\mathcal{K}$, which in turn leads to a more accurate EDMD
approximation. In fact, the invariance proximity directly corresponds
to the worst-case relative prediction error of the EDMD model over the
subspace. We can show that
\begin{equation}
  \delta(\mathcal{S}) = \sup_{\substack{f \in \mathcal{S}
      \\
      \|\mathcal{K}f\| \neq 0}} \frac{\| \mathcal{K}f -
    K_{\text{EDMD}}f \|}{\| \mathcal{K}f \|}.
\end{equation}

\subsection{Reproducing Kernel Hilbert Spaces}\label{subsec:RKHS_prelim}

Following~\cite{JHM-POA:15}, we consider a Reproducing Kernel Hilbert Space (RKHS) $\mathcal{N}(\mathcal{X})$ of functions $f: \mathcal{X} \to \mathbb{R}$ defined on a non-empty set $\mathcal{X} \subseteq \mathbb{R}^n$. This space is equipped with an inner product $\langle \cdot, \cdot \rangle_{\mathcal{N}(\mathcal{X})}$ and is uniquely associated with a symmetric, positive definite kernel $k: \mathcal{X} \times \mathcal{X} \to \mathbb{R}$. For any $x \in \mathcal{X}$, the kernel section $\Phi_x(\cdot) = k(\cdot, x)$ resides in $\mathcal{N}(\mathcal{X})$ and satisfies the reproducing property:
\begin{equation}
  f(x) = \langle f, \Phi_x \rangle_{\mathcal{N}(\mathcal{X})}, \quad \forall f \in \mathcal{N}(\mathcal{X}).
\end{equation}
Crucially, this structure allows us to evaluate inner products in the infinite-dimensional RKHS $\mathcal{N}(\mathcal{X})$ using simple kernel evaluations in the original state space, $\langle \Phi_{x_i}, \Phi_{x_j} \rangle_{\mathcal{N}(\mathcal{X})} = k(x_i, x_j)$. This is known as the kernel trick.

To streamline the algebraic manipulation of these infinite-dimensional
objects, we adopt a compact matrix-like notation for collections of
functions in $\Nc(\Xc)$. Let $F = [f_1 \, f_2 \, \dots \, f_p]$ and
$G = [g_1 \, g_2 \, \dots \, g_q]$ be collections of functions in
$\Nc(\Xc)$, represented as formal row vectors. We define their
cross-inner product matrix $F^\top G \in \real^{p \times q}$ as the
real-valued matrix whose $(i, j)$-th entry evaluates the RKHS inner
product $\langle f_i, g_j \rangle_{\Nc(\Xc)}$. Furthermore, given a
real coefficient matrix $W \in \real^{p \times r}$, the
right-multiplication $F W$ defines a new collection of $r$ functions
in $\Nc(\Xc)$, where the $j$-th function is given by the linear
combination $\sum_{i=1}^p f_i W_{ij}$.

Utilizing this algebraic notation, given two datasets of points
$X = [x_1 \, x_2 \, \dots \, x_N] \in \real^{n \times N}$ and
$Y = [y_1 \, y_2 \, \dots \, y_M] \in \real^{n \times M}$, we define
the collections of empirical kernel sections
\begin{align}
  \Phi_X \!=\! [\Phi_{x_1} \, \Phi_{x_2} \, \cdots \, \Phi_{x_N}], \label{eqn:kernel_section_defn}
  \,\,
  \Phi_Y\!=\! [\Phi_{y_1} \, \Phi_{y_2} \, \cdots \, \Phi_{y_M}].    
\end{align}
By our established convention, the cross-inner product matrix of these
collections yields the empirical kernel matrix
$K_{X, Y} \triangleq \Phi_X^\top \Phi_Y \in \real^{N \times M}$, whose
elements are
\begin{equation}\label{eqn:kernel_matrix_defn}
  [K_{X, Y}]_{ij} = \langle \Phi_{x_i}, \Phi_{y_j} \rangle_{\Nc(\Xc)} =
  k(x_i, y_j), 
\end{equation}
for $i \in [N], \,\, j \in [M]$.

\subsection{Kernel EDMD as an RKHS Orthogonal Projection}

% \marginJC{But what is the kernel EDMD thing? Have we discussed it?
%   Does it refer to~\cite{MOW-CWR-IGK:14}?}
% \marginDS{Added an introductory paragraph and citing the first paper.}

Kernel Extended Dynamic Mode Decomposition
(kEDMD)~\cite{MOW-CWR-IGK:14} was introduced to alleviate the
computational burden of standard EDMD in high-dimensional systems. By
employing the kernel trick, kEDMD implicitly defines a rich,
high-dimensional dictionary of observables without the need to
explicitly compute them. To derive the kEDMD matrix and analyze this
finite-data approximation rigorously, we adopt the RKHS projection viewpoint~\cite{FK-FMP-MS-AS-KW:25}.

The procedure is the same as described in \eqref{eqn:EDMD_approx}, but
here we work in the RKHS $\mathcal{N}(\mathcal{X})$ induced by the
kernel $k(\cdot, \cdot)$ and use its inner product structure to
compute the orthogonal projection of the Koopman operator.

Given data snapshots $X,T(X) \in \mathbb{R}^{n \times N}$, we define
the data-driven dictionary $\Phi_X$ as the collection of empirical
kernel sections defined in \eqref{eqn:kernel_section_defn}. We
consider the finite-dimensional subspace
$\text{span}(\Phi_X) \subset \mathcal{N}(\mathcal{X})$ spanned by this
dictionary. Any observable function $f \in \text{span}(\Phi_X)$ can be
parameterized by a coefficient vector $\alpha \in \mathbb{R}^N$ such
that $f = \Phi_X \alpha$. The image under the Koopman operator of $f$
is $\mathcal{K}f = \mathcal{K} (\Phi_X \alpha)$. To find its
orthogonal projection
$P_{\text{span}(\Phi_X)} \mathcal{K} f = \Phi_X \beta$, we require the
new coefficient vector $\beta \in \mathbb{R}^N$ to satisfy the
interpolation condition at the nodes $X$ (see \cite[Sec
3.2]{FK-FMP-MS-AS-KW:25}). Evaluating a function at the nodes $X$ in
an RKHS is equivalent to taking the cross-inner product with
$\Phi_X^\top$. Therefore, we equate the evaluations of our projected
function and the true evolved function:
\begin{equation}
  \Phi_X^\top (\Phi_X \beta) = \Phi_X^\top (\mathcal{K}\Phi_X \alpha) .
\end{equation}
On the lefthand side, we have the standard kernel matrix
$K_{X,X} = \Phi_X^\top \Phi_X$. On the righthand side, we have
$\Phi_X^\top \mathcal{K}\Phi_X = K_{T(X),X}$. Substituting these:
\begin{align*}
  K_{X,X} \beta &= K_{T(X),X} \alpha \implies \beta = K_{X,X}^{-1} K_{T(X),X} \alpha.
\end{align*}
Thus, the finite-dimensional kEDMD matrix that advances the coordinate
vector $\alpha$ forward in time, representing the strict orthogonal
projection of the Koopman operator on $\text{span}(\Phi_X)$, is given
by:
\begin{equation}
  K_{\text{kEDMD}} = K_{X,X}^{-1} K_{T(X),X}.
    \label{eqn:kEDMD_matrix}
\end{equation}

% \marginDS{I have gotten into the habit of using vectors and functions
%   interchangeably. I have added it in the notation here. Let me know
%   what you think.}
% %
% \marginJC{Ok for the conf paper, but we'll have to revisit this for
%   the journal version}
%

\section{Problem Formulation}\label{sec:problem_formulation}

Consider the discrete-time dynamical system~\eqref{eqn:sys_dynamics}
and fix a function space $\mathcal{N}(\mathcal{X})$ induced by a
symmetric, positive definite kernel $k(\cdot, \cdot)$. Let
$X, T(X)\in \mathbb{R}^{n \times N}$ denote the trajectory data
snapshots, and let
$\Phi_X = [\Phi_{x_1} \, \Phi_{x_2} \, \cdots \, \Phi_{x_N}]$ denote
the collection of empirical kernel sections.

To systematically analyze the Koopman operator using finite data in a computationally tractable manner, we
restrict our attention to a reduced function space
$\mathcal{S} = \text{span}(\mathcal{V})$, which is spanned by a
collection $\mathcal{V} = [v_1 \, v_2 \, \cdots \, v_s]$ of $s$
observable functions in $\mathcal{N}(\mathcal{X})$. We anchor the small, computationally manageable
$s$-dimensional subspace $\mathcal{S}$ to the rich geometry of the
full dataset by defining $\mathcal{V} = \Phi_X W_{\mathcal{V}}$ for
some coefficient matrix $W_{\mathcal{V}} \in \mathbb{R}^{N \times
  s}$. This ensures that
\begin{align*}
  \mathcal{S} = \text{span}(\mathcal{V}) \subseteq \text{span}(\Phi_X) ,
\end{align*}
where $\dim(\mathcal{S}) \leq s \ll N$. 

Because a
low-dimensional subspace $\mathcal{S}$ is generally not invariant
under the Koopman operator $\mathcal{K}$, we must actively measure and
improve its invariance proximity. This operational challenge motivates
our first problem:

\begin{problem}\longthmtitle{Invariant Subspace Search in
    RKHS}\label{problem:subspace_search}
  Given an initial subspace $\Sc$ and a
  tolerance $\epsilon \in [0, 1)$, find a subspace
  $\mathcal{S}^* \subseteq \Sc$ of the largest
  possible dimension such that:
  \begin{equation}\label{eq:problem-bound}
    \delta(\mathcal{S}^*) \le \epsilon,
  \end{equation}
  where $\delta(\mathcal{S}^*)$ is the invariance proximity of $\mathcal{S}^*$ with respect to the RKHS inner product. \oprocend
\end{problem}

The recently proposed Single Principal Vector (SPV) pruning
procedure~\cite{DS-JC:26-cdc1} provides a systematic solution to this search problem. This iterative method eliminates the principal vector associated with the largest principal angle at each step, followed by a recomputation of the principal arguments for the reduced subspace. This pruning process is repeated until the invariance proximity of the subspace satisfies the prescribed tolerance $\epsilon > 0$.
%
%\marginJC{Briefly explain what SPV consists of: since later we keep
%  referring to the SPV prunning procedure described above, but we
%  never quite described it }
%\marginDS{Resolved}
%

To execute this
procedure, however, one must be able to compute the principal angles
and vectors between the subspaces $\mathcal{S}$ and
$\mathcal{K}\mathcal{S}$. In a standard Euclidean setting, this is
straightforwardly achieved via the singular value decomposition (SVD)
of the basis vectors' inner product matrix. However, computing these
quantities within an RKHS is highly non-trivial due to the underlying
geometry and the implicit nature of its elements. Overcoming this
hurdle is a critical enabler for the SPV procedure and constitutes the
main technical contribution of this paper:

\begin{problem}\longthmtitle{RKHS Principal Angles and
    Vectors}\label{problem:pa_pv}
  Compute the principal angles and principal vectors between the
  subspaces $\mathcal{S}$ and $\mathcal{K}\mathcal{S}$, defined with
  respect to the RKHS inner product
  $\langle \cdot, \cdot \rangle_{\mathcal{N}}$. \oprocend
\end{problem}
\smallskip

The primary challenge in solving Problem \ref{problem:pa_pv} lies in computing the inner products between elements of $\Kc\Sc$. While the kernel trick successfully facilitates these computations within $\Sc$, as well as between $\Sc$ and $\Kc\Sc$, it cannot be applied to inner products exclusively among elements of $\Kc\Sc$. To rigorously formulate the exact computational routine required to overcome this, we must first establish the following formal assumption regarding the system and the entire collected dataset:

% \marginDS{We need this assumption to justify that the non-invariant
% residuals can be made arbitrarily small by more data.}
% \marginDS{Read the remark. I have relaxed the assumptions.}
\begin{assumption}\longthmtitle{Idealized Dataset
    Invariance}\label{assum:ideal_invariance}
  The dataset $X, T(X)$ is sufficiently rich so that the Koopman image
  of the reduced subspace $\mathcal{S}$ is contained within the span
  of the kernel sections defined by the dataset, i.e.,
  $\text{span}(\mathcal{K}\Vc) \subseteq \text{span}(\Phi_X)$ for a
  sufficiently large $N$.
\end{assumption}

Although Assumption~\ref{assum:ideal_invariance} is rarely achieved
with finite data in practice, this idealized condition establishes the
foundation for deriving the principal angles by enabling the
computation of inner products among elements of $\Kc\Sc$ via the
kernel trick.
% We address the practical reality of non-invariant residual errors
% below.

\begin{remark}\longthmtitle{Justification of Finite-Data
    Containment}\label{remark:non_invariant_residuals}
  Although exact subspace containment (cf. Assumption
  \ref{assum:ideal_invariance}) is an idealization in the finite-data
  regime, it is justified by the universality of the
  RKHS. Specifically, the dataset $X$ can be systematically expanded
  to enrich the computational basis $\text{span}(\Phi_X)$ without
  altering the base subspace $\mathcal{S} =
  \text{span}(\mathcal{V})$. Provided the underlying system dynamics
  are sufficiently smooth to preserve the regularity of
  $\text{span}(\Kc \Vc)$, the density of the RKHS ensures that as the
  sample size $N$ grows, the orthogonal residual of the Koopman image
  $\text{span}(\Kc \Vc)$ relative to $\text{span}(\Phi_X)$ can be
  driven arbitrarily close to zero. \oprocend
\end{remark}

\section{Principal Angle and Vector Computation}\label{sec:exact_RKHS_PA}

We derive here the exact computational routine to solve
Problem~\ref{problem:pa_pv}. This procedure involves first
constructing the requisite empirical Gram matrices, utilizing
truncated eigendecompositions for stable
basis orthogonalization, and ultimately computing the principal
arguments via SVD.

Given $\Sc = \text{span}(\Vc) \subset \Nc(\Xc)$, let
$\Kc \Sc = \text{span}(\Kc \Vc)$, where
$\Kc \Vc = [\Kc v_1 \, \Kc v_2 \, \cdots \, \Kc v_s]$.  We can express
$\Kc \Vc$ as a linear combination of the kernel sections in $\Phi_X$.

\begin{lemma}[Computing $W_{\Kc \Vc}$]\label{lemma:W_KV}
  The basis vectors of $\Kc \Sc$ are given by
  $\Kc \Vc = \Phi_X W_{\Kc \Vc}$, where
  $W_{\Kc \Vc} \in \real^{N \times s}$ satisfies
  \begin{equation}\label{eq:W_KV_eqn}
    K_{T(X),X} W_{\Vc} = K_{X,X} W_{\Kc \Vc}.
  \end{equation}
  Here, $K_{T(X),X}, K_{X,X} \in \real^{N \times N}$ are the kernel
  matrices defined according to \eqref{eqn:kernel_matrix_defn}.
\end{lemma}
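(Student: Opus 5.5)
Assumption~\ref{assum:ideal_invariance} gives $\text{span}(\Kc\Vc)\subseteq\text{span}(\Phi_X)$. Hence each $\Kc v_j$ can be written as $\Phi_X w_j$ for some $w_j\in\real^N$. Stacking these as $W_{\Kc\Vc}=[w_1\,\cdots\,w_s]$ yields the representation $\Kc\Vc=\Phi_X W_{\Kc\Vc}$. This settles existence. What remains is to identify a linear system that $W_{\Kc\Vc}$ must satisfy and that can be formed purely from kernel evaluations.

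To derive~\eqref{eq:W_KV_eqn}, I will take the cross-inner product of both sides of $\Kc\Vc=\Phi_X W_{\Kc\Vc}$ with $\Phi_X$, as in the kEDMD derivation. The right side becomes $\Phi_X^\top\Phi_X W_{\Kc\Vc}=K_{X,X}W_{\Kc\Vc}$. For the left side, linearity of $\Kc$ and $\Vc=\Phi_X W_\Vc$ give $\Kc\Vc=(\Kc\Phi_X)W_\Vc$, so $\Phi_X^\top\Kc\Vc=(\Phi_X^\top\Kc\Phi_X)W_\Vc$. I will then compute the entries of $\Phi_X^\top\Kc\Phi_X$ using the reproducing property. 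The $(i,j)$ entry is
\[
\langle \Phi_{x_i},\Kc\Phi_{x_j}\rangle_{\Nc(\Xc)}=(\Kc\Phi_{x_j})(x_i)=\Phi_{x_j}(T(x_i))=k(T(x_i),x_j)=[K_{T(X),X}]_{ij}.
\]
This matches the identity $\Phi_X^\top\Kc\Phi_X=K_{T(X),X}$ already used in the kEDMD derivation. Equating the two sides gives $K_{T(X),X}W_\Vc=K_{X,X}W_{\Kc\Vc}$.

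The main subtlety is the step where the reproducing property is applied to $\Kc\Phi_{x_j}$. That step requires $\Kc\Phi_{x_j}\in\Nc(\Xc)$, which follows from the standing assumption that the function space is closed under composition with $T$. In fact the argument only needs $\Kc\Vc\in\Nc(\Xc)$, which Assumption~\ref{assum:ideal_invariance} supplies directly. Working with the whole $\Vc=\Phi_XW_\Vc$ at once, the identity $\langle\Phi_{x_i},\Kc v\rangle=(\Kc v)(x_i)=v(T(x_i))$ together with $v=\Phi_X W_\Vc$ evaluated at $T(x_i)$ gives the same entries.

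I will also note that~\eqref{eq:W_KV_eqn} characterizes $W_{\Kc\Vc}$ in two ways:
\begin{itemize}
\item \textbf{Sufficiency.} If $K_{X,X}$ is invertible, which holds for a strictly positive definite kernel with distinct nodes, then $W_{\Kc\Vc}=K_{X,X}^{-1}K_{T(X),X}W_\Vc$ is unique.
\item \textbf{Necessity.} Conversely, any solution $W$ of~\eqref{eq:W_KV_eqn} satisfies $\Phi_X^\top(\Phi_XW-\Kc\Vc)=0$. Since $\Phi_XW-\Kc\Vc\in\text{span}(\Phi_X)$ and is orthogonal to $\text{span}(\Phi_X)$, it must vanish.
\end{itemize}
So solving~\eqref{eq:W_KV_eqn} recovers exactly the coefficients of $\Kc\Vc$.
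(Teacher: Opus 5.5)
Your proposal is correct and follows essentially the same route as the paper: invoke Assumption~\ref{assum:ideal_invariance} to write $\Kc\Vc=\Phi_X W_{\Kc\Vc}$, take the cross-inner product of both sides with $\Phi_X$, and identify $\Phi_X^\top\Kc\Phi_X W_{\Vc}=K_{T(X),X}W_{\Vc}$ via the reproducing property. Your additions go beyond what the paper writes out: the entrywise check, the observation that only $\Kc\Vc\subset\Nc(\Xc)$ is needed, the converse showing that any solution recovers $\Kc\Vc$, and the distinct-nodes caveat for invertibility of $K_{X,X}$. The paper leaves these implicit and simply cites positive definiteness of $K_{X,X}$ for uniqueness.
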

\begin{proof}
  Based on Assumption \ref{assum:ideal_invariance}, we have
  $\Kc \Sc = \text{span}(\Kc \Vc) \subseteq
  \text{span}(\Phi_X)$. Thus, we can express $\Kc \Vc$ as a linear
  combination of the kernel sections in $\Phi_X$, i.e.,
  $\Kc \Vc = \Phi_X W_{\Kc \Vc}$.  Taking the inner product with
  $\Phi_X$ yields
  \begin{align*}
    \Phi_X^{\top} \Kc \Vc = \Phi_X^{\top} \Kc \Phi_X W_{\Vc} =
    K_{T(X),X} W_{\Vc}
    = \Phi_X^{\top}
    \Phi_X
    W_{\Kc
    \Vc} ,
  \end{align*}
  which implies \eqref{eq:W_KV_eqn}.  The matrix $K_{X,X}$ is positive
  definite because the kernel is positive definite, and this
  guarantees the uniqueness of $W_{\Kc \Vc}$.
\end{proof}

\begin{remark}[Computational Complexity]
  Equation~\eqref{eq:W_KV_eqn} is a linear system of equations that
  can be solved for $W_{\Kc \Vc}$, given $W_{\Vc}$ and the kernel
  matrices $K_{T(X),X}$ and $K_{X,X}$. In practice, we add a small
  regularization term $\lambda I$ to $K_{X,X}$ to ensure numerical
  stability. The computational complexity of solving for $W_{\Kc \Vc}$
  is $\mathcal{O}(N^3)$. This is a significant bottleneck for large
  datasets, motivating the need for scalable approximations tackled in
  the forthcoming Section~\ref{sec:nystrom_approximation}. \oprocend
\end{remark}

Define the Gram matrices associated with the subspaces $\Sc$ and
$\Kc \Sc$:
\begin{equation}
  \begin{aligned}
    M_{\Vc} &= \Vc^{\top} \Vc = W_{\Vc}^{\top} K_{X,X} W_{\Vc} \in
              \real^{s \times s},
    \\ 
    M_{\Kc \Vc} &= (\Kc \Vc)^{\top} (\Kc \Vc) = W_{\Kc \Vc}^{\top}
                  K_{X,X} W_{\Kc \Vc} \in \real^{s \times s},
    \\ 
    M_{\text{cross}} &= \Vc^{\top} (\Kc \Vc) = W_{\Vc}^{\top}
                       K_{T(X),X} W_{\Vc} \in \real^{s \times s}. 
  \end{aligned}
  \label{eqn:Gram_matrices}
\end{equation}
Since we have not made any assumptions on the linear independence of
the vectors in $\Vc$ and $\Kc \Vc$, the Gram matrices may be
rank-deficient. Let $r_{\Vc} = \text{rank}(M_{\Vc}) \le s$ and
$r_{\Kc \Vc} = \text{rank}(M_{\Kc \Vc}) \le r_{\Vc}$. Let their
\texttt{QR} decompositions be given by:
\begin{equation}
  \Vc = Q_{\Vc} R_{\Vc} , \,\, \Kc \Vc = {Q}_{\Kc \Vc} {R}_{\Kc \Vc}
\label{eqn:QR_decomp}
\end{equation}
where $Q_{\Vc}$ and $Q_{\Kc \Vc}$ are a collection of orthonormal
vectors with $r_{\Vc}$ and $r_{\Kc \Vc}$ vectors respectively, and
$R_{\Vc} \in \real^{r_{\Vc} \times s}$ and
$R_{\Kc \Vc} \in \real^{r_{\Kc \Vc} \times s}$ are upper triangular
matrices.

Note that we cannot explicitly compute the \texttt{QR} decompositions
of $\Vc$ and $\Kc \Vc$ since they are represented implicitly as linear
combinations of kernel sections. However, we can obtain the
\texttt{QR} decompositions from the eigen decompositions of
the Gram matrices $M_{\Vc}$ and $M_{\Kc \Vc}$, as described next.

\begin{lemma}\longthmtitle{Obtaining QR Decompositions via Spectral Truncation}\label{lemma:qr_decompositions}
  Let $M_{\mathcal{V}}$ and $M_{\mathcal{K}\mathcal{V}}$ be the symmetric, positive semi-definite Gram matrices, and consider their truncated eigendecompositions:
  \begin{equation}\label{eqn:eigen_decomp}
    \begin{aligned}
      M_{\mathcal{V}} = \tilde{V}_{\mathcal{V}} \tilde{\Lambda}_{\mathcal{V}} \tilde{V}_{\mathcal{V}}^{\top} , \,\,
      M_{\mathcal{K}\mathcal{V}} = \tilde{V}_{\mathcal{K}\mathcal{V}} \tilde{\Lambda}_{\mathcal{K}\mathcal{V}} \tilde{V}_{\mathcal{K}\mathcal{V}}^{\top} ,
    \end{aligned}
  \end{equation}
  where $\tilde{\Lambda}_{\mathcal{V}} \in \mathbb{R}^{r_{\mathcal{V}} \times r_{\mathcal{V}}}$ and $\tilde{\Lambda}_{\mathcal{K}\mathcal{V}} \in \mathbb{R}^{r_{\mathcal{K}\mathcal{V}} \times r_{\mathcal{K}\mathcal{V}}}$ are diagonal matrices containing the strictly positive eigenvalues. Consequently, $r_{\mathcal{V}}$ and $r_{\mathcal{K}\mathcal{V}}$ denote the exact ranks of $M_{\mathcal{V}}$ and $M_{\mathcal{K}\mathcal{V}}$, respectively. Let $\tilde{V}_{\mathcal{V}}, \tilde{V}_{\mathcal{K}\mathcal{V}}$ contain the corresponding eigenvectors. Define the intermediate matrices $B_{\mathcal{V}} = \tilde{\Lambda}_{\mathcal{V}}^{1/2} \tilde{V}_{\mathcal{V}}^{\top}$ and $B_{\mathcal{K}\mathcal{V}} = \tilde{\Lambda}_{\mathcal{K}\mathcal{V}}^{1/2} \tilde{V}_{\mathcal{K}\mathcal{V}}^{\top}$. Let $B_{\mathcal{V}} = Q_{B_{\mathcal{V}}} R_{\mathcal{V}}$ and $B_{\mathcal{K}\mathcal{V}} = Q_{B_{\mathcal{K}\mathcal{V}}} R_{\mathcal{K}\mathcal{V}}$ be their respective economic \texttt{QR} decompositions. Then, the upper-triangular factors $R_{\mathcal{V}}$ and $R_{\mathcal{K}\mathcal{V}}$ satisfy $M_{\mathcal{V}} = R_{\mathcal{V}}^\top R_{\mathcal{V}}$ and $M_{\mathcal{K}\mathcal{V}} = R_{\mathcal{K}\mathcal{V}}^\top R_{\mathcal{K}\mathcal{V}}$, and 
  \begin{equation}
    \begin{aligned}
      Q_{\mathcal{V}} &= \mathcal{V} R_{\mathcal{V}}^{\dagger},  \quad R_{\mathcal{V}}^{\dagger} = \tilde{V}_{\mathcal{V}} \tilde{\Lambda}_{\mathcal{V}}^{-1/2} Q_{B_{\mathcal{V}}}, \\
      Q_{\mathcal{K} \mathcal{V}} &= \mathcal{K} \mathcal{V} R_{\mathcal{K} \mathcal{V}}^{\dagger},  \quad R_{\mathcal{K}\mathcal{V}}^{\dagger} = \tilde{V}_{\mathcal{K}\mathcal{V}} \tilde{\Lambda}_{\mathcal{K}\mathcal{V}}^{-1/2} Q_{B_{\mathcal{K}\mathcal{V}}}, 
    \end{aligned}
    \label{eqn:R_terms_eigen}
  \end{equation}
  describe the orthogonal \texttt{QR} factors of the implicit subspaces.
\end{lemma}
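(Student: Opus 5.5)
The argument is identical for $\mathcal{V}$ and $\mathcal{K}\mathcal{V}$: each is a finite collection of RKHS functions whose Gram matrix we know. So I would carry it out once for a generic collection $F$ with Gram matrix $M = F^\top F$ of rank $r$ and truncated eigendecomposition $M = \tilde V \tilde\Lambda \tilde V^\top$. I set $B = \tilde\Lambda^{1/2}\tilde V^\top$ and take its economic QR factorization $B = Q_B R$. The claims to verify are: (i) $M = R^\top R$; (ii) $R^\dagger = \tilde V \tilde\Lambda^{-1/2} Q_B$; and (iii) $Q_F := F R^\dagger$ has orthonormal columns (in the sense $Q_F^\top Q_F = I_r$), spans $\operatorname{span}(F)$, and satisfies $F = Q_F R$.

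For (i), I use $B^\top B = \tilde V \tilde\Lambda \tilde V^\top = M$. Since $B$ is $r\times s$ of full row rank $r$, its economic QR has $Q_B \in \real^{r\times r}$ orthogonal. Hence $R^\top R = R^\top Q_B^\top Q_B R = B^\top B = M$. For (ii), $Q_B$ is orthogonal, so $R = Q_B^\top B = Q_B^\top \tilde\Lambda^{1/2}\tilde V^\top$, which is $r\times s$ of full row rank. For a full-row-rank matrix, $R^\dagger = R^\top (R R^\top)^{-1}$. Using $\tilde V^\top \tilde V = I_r$, we get $R R^\top = Q_B^\top \tilde\Lambda Q_B$, and therefore $R^\dagger = \tilde V \tilde\Lambda^{1/2} Q_B \, Q_B^\top \tilde\Lambda^{-1} Q_B = \tilde V \tilde\Lambda^{-1/2} Q_B$. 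An alternative check is to verify directly that $R R^\dagger = I_r$ and that $R^\dagger R = \tilde V\tilde V^\top$ is symmetric, so the Penrose conditions hold.

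For (iii), orthonormality follows from the cross-inner-product convention: $Q_F^\top Q_F = (R^\dagger)^\top F^\top F R^\dagger = (R^\dagger)^\top R^\top R R^\dagger = (R R^\dagger)^\top (R R^\dagger) = I_r$. The main obstacle is the reconstruction $F = Q_F R$, i.e.\ $F R^\dagger R = F$, when $F$ is rank deficient. Here $R^\dagger R = \tilde V \tilde V^\top$ is the orthogonal projector onto $\operatorname{range}(M)$, so I must show $F(I - \tilde V\tilde V^\top) = 0$ as a collection of RKHS functions. I would argue this through the null space. For any $c \in \ker M$ we have $\|Fc\|^2_{\Nc} = c^\top M c = 0$, so $Fc = 0$. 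The columns of $I - \tilde V\tilde V^\top$ lie in $\ker M = \operatorname{range}(M)^\perp$, and this gives the identity. It then also shows that $\operatorname{span}(Q_F) = \operatorname{span}(F)$, with dimension $r$ equal to $\operatorname{rank}(M)$. Finally, $R$ is upper triangular by construction of the QR factorization, so $(Q_F, R)$ is a valid (economic) QR decomposition of the implicit collection, matching \eqref{eqn:QR_decomp}.

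I would close by applying this argument with $F = \mathcal{V}$, $M = M_{\mathcal{V}}$ and with $F = \mathcal{K}\mathcal{V}$, $M = M_{\mathcal{K}\mathcal{V}}$. The Gram matrices are available from \eqref{eqn:Gram_matrices} via Lemma~\ref{lemma:W_KV}, so everything in \eqref{eqn:R_terms_eigen} is computable from kernel matrices.
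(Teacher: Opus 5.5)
Your proposal is correct and follows essentially the same route as the paper: you verify $R^\top R = B^\top B = M$, obtain the closed form $R^\dagger = \tilde V\tilde\Lambda^{-1/2}Q_B$ (via $R^\top(RR^\top)^{-1}$, where the paper uses $(Q_B^\top B)^\dagger = B^\dagger Q_B$), and check orthonormality through $(R^\dagger)^\top M R^\dagger = I_r$. Your extra null-space argument showing $F R^\dagger R = F\tilde V\tilde V^\top = F$ is not in the paper's proof, which only checks orthonormality; it is exactly what justifies treating $Q_{\mathcal{V}}$ and $Q_{\mathcal{K}\mathcal{V}}$ as orthonormal bases spanning $\mathcal{S}$ and $\mathcal{K}\mathcal{S}$, as assumed in the proof of Theorem~\ref{thm:pa_pv_rkhs}.
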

\smallskip

\begin{proof}
  We verify the procedure for $\mathcal{V}$; the proof for $\mathcal{K}\mathcal{V}$ is identical. Because the Gram matrix $M_{\mathcal{V}}$ is symmetric and positive semi-definite, its rank is exactly equal to the number of its strictly positive eigenvalues, confirming $\text{rank}(M_{\mathcal{V}}) = r_{\mathcal{V}}$. 
  Next, note that $R_{\mathcal{V}}^\top R_{\mathcal{V}} = (Q_{B_{\mathcal{V}}}^\top B_{\mathcal{V}})^\top (Q_{B_{\mathcal{V}}}^\top B_{\mathcal{V}}) = B_{\mathcal{V}}^\top B_{\mathcal{V}} = \tilde{V}_{\mathcal{V}} \tilde{\Lambda}_{\mathcal{V}} \tilde{V}_{\mathcal{V}}^\top$, which recovers the truncated Gram matrix. Since $Q_{B_{\mathcal{V}}}$ is a square orthogonal matrix, the Moore-Penrose pseudo-inverse of $R_{\mathcal{V}} = Q_{B_{\mathcal{V}}}^\top B_{\mathcal{V}}$ is explicitly given by $R_{\mathcal{V}}^\dagger = B_{\mathcal{V}}^\dagger Q_{B_{\mathcal{V}}} = \tilde{V}_{\mathcal{V}} \tilde{\Lambda}_{\mathcal{V}}^{-1/2} Q_{B_{\mathcal{V}}}$. Finally,
  \begin{equation*}
    Q_{\mathcal{V}}^\top Q_{\mathcal{V}} = (R_{\mathcal{V}}^{\dagger})^{\top} M_{\mathcal{V}} R_{\mathcal{V}}^{\dagger} = (R_{\mathcal{V}}^{\dagger})^{\top} (R_{\mathcal{V}}^{\top} R_{\mathcal{V}}) R_{\mathcal{V}}^{\dagger} = I_{r_{\mathcal{V}}},
  \end{equation*}
  verifies that $Q_{\mathcal{V}}$ is orthonormal.
\end{proof}

\smallskip
\begin{remark}\longthmtitle{Numerical
    Implementation}\label{remark:numerical_tolerance} 
  Because Gram matrices often exhibit exponentially decaying spectra,
  exact rank determination is computationally fragile. In practice, we
  implement the eigendecompositions in~\eqref{eqn:eigen_decomp} using
  a hard numerical tolerance (e.g., $\tau = 10^{-8}$), discarding
  eigenvalues below this threshold to safely compute the inverse
  square roots without inflating numerical noise. \oprocend
\end{remark}

We utilize the \texttt{QR} decompositions to compute the principal
angles and vectors via \cite[Theorem 1]{AB-GHG:73}. Let
$\{u_i^\Sc\}_{i=1}^{k_r} \subset \Sc$ and
$\{\Kc v_i^{\Kc \Sc}\}_{i=1}^{k_r} \subset \Kc \Sc$ be the principal
vectors corresponding to the principal angles $\{\theta_i\}_{i=1}^{k_r}$
between $\Sc$ and $\Kc \Sc$. Here, $k_r = \min(r_{\Vc}, r_{\Kc \Vc})$ is
the number of principal angles and vectors. The next result gives an
explicit formula for computing these principal arguments wrt the RKHS.

\smallskip
\begin{theorem}\longthmtitle{Principal Angles and Vectors in
    RKHS}\label{thm:pa_pv_rkhs}
  Let the cosine matrix be defined as
  $\real^{r_{\Vc} \times r_{\Kc \Vc}} \ni M = Q_{\Vc}^{\top} {Q}_{\Kc
    \Vc} = (R_{\Vc}^{\dagger})^{\top} M_{\text{cross}} {R}_{\Kc
    \Vc}^\dagger$. Let the SVD of $M$ be given by
  ${U} \Sigma {V}^{\top} = M$, where $U \in \real^{r_{\Vc} \times k_r}$
  and $V \in \real^{r_{\Kc \Vc} \times k_r}$ are orthonormal matrices,
  and $\Sigma = \text{diag}(\sigma_1, \dots, \sigma_{k_r})$ with
  $\sigma_1 \geq \cdots \geq \sigma_{k_r} \geq 0$. Then, the principal
  angles and vectors between $\Sc$ and $\Kc \Sc$ for $i = 1, \dots, k_r$ are given by
  \begin{equation}
    \begin{aligned}
      \cos \theta_i = \sigma_i, \,\,
      \Uc^{\Sc} = \Vc \Ac_{\Vc}, \,\,
      \Kc \Vc^{\Kc \Sc} = \Kc \Vc \Ac_{\Kc \Vc},
    \end{aligned}
  \end{equation}
  %
  % \marginJC{What are $\Ac_{\Vc}$ and $\Ac_{\Kc \Vc}$? Or are you
  % introducing them here?}  \marginDS{They are the coefficient
  % vectors. I had an error %before, it is corrected now.}
  %
  where $\Ac_{\Vc} = R_{\Vc}^{\dagger} U$ and $\Ac_{\Kc \Vc} =
  R_{\Kc \Vc}^{\dagger} V$ are the coordinate matrices of the principal vectors
  $\Uc^{\Sc} \!=\! [u_1^\Sc \, \cdots \, u_{k_r}^\Sc]$ and
  $\Kc \Vc^{\Kc \Sc} \!=\! [\Kc v_1^{\Kc \Sc} \, \cdots \, \Kc
  v_{k_r}^{\Kc \Sc}]$.
\end{theorem}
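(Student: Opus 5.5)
The plan is to reduce the RKHS problem to the finite-dimensional Björck--Golub setting \cite[Theorem 1]{AB-GHG:73} using the implicit orthonormal bases from Lemma~\ref{lemma:qr_decompositions}. That theorem holds in any real Hilbert space, since its proof uses only inner products and orthonormal bases. If $Q_{\Vc}$ and $Q_{\Kc\Vc}$ are orthonormal bases of $\Sc$ and $\Kc\Sc$, and $Q_{\Vc}^\top Q_{\Kc\Vc} = U\Sigma V^\top$, then the cosines of the principal angles are the singular values. The principal vectors are then $Q_{\Vc}U$ and $Q_{\Kc\Vc}V$. So the work splits into three steps:
\begin{enumerate}
\item check that the columns of $Q_{\Vc}$ and $Q_{\Kc\Vc}$ are orthonormal bases of the correct subspaces;
\item verify the matrix formula for $M$;
\item rewrite the principal vectors in terms of $\Vc$ and $\Kc\Vc$.
\end{enumerate}

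For the first step, Lemma~\ref{lemma:qr_decompositions} already gives orthonormality, $Q_{\Vc}^\top Q_{\Vc}=I_{r_{\Vc}}$. It remains to show that $\text{span}(Q_{\Vc})=\Sc$. Clearly $\text{span}(Q_{\Vc})\subseteq\Sc$, because $Q_{\Vc}=\Vc R_{\Vc}^\dagger$. For the reverse inclusion, note that $Q_{\Vc}$ consists of $r_{\Vc}$ orthonormal, hence independent, elements of $\Sc$. Moreover, $\dim\Sc=\operatorname{rank}(M_{\Vc})=r_{\Vc}$, since the rank of a Gram matrix equals the dimension of the span of its generators. The same argument applies to $\Kc\Vc$, with $\dim \Kc\Sc = r_{\Kc\Vc}$. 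This step is where Assumption~\ref{assum:ideal_invariance} and Lemma~\ref{lemma:W_KV} enter, since they make $M_{\Kc\Vc}$ a computable Gram matrix.

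For the second step, we expand using the bilinearity of the formal product notation:
\begin{equation*}
Q_{\Vc}^\top Q_{\Kc\Vc}=(\Vc R_{\Vc}^\dagger)^\top(\Kc\Vc R_{\Kc\Vc}^\dagger)=(R_{\Vc}^\dagger)^\top \Vc^\top\Kc\Vc\, R_{\Kc\Vc}^\dagger=(R_{\Vc}^\dagger)^\top M_{\text{cross}}R_{\Kc\Vc}^\dagger .
\end{equation*}
The cross Gram matrix equals $W_{\Vc}^\top K_{T(X),X}W_{\Vc}$ by the reproducing property, exactly as in \eqref{eqn:Gram_matrices}.

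Applying \cite[Theorem 1]{AB-GHG:73} to this $M$ gives $\cos\theta_i=\sigma_i$ for $i\le k_r=\min(r_{\Vc},r_{\Kc\Vc})$. It also gives the principal vectors $\Uc^{\Sc}=Q_{\Vc}U=\Vc R_{\Vc}^\dagger U=\Vc\Ac_{\Vc}$, and likewise $Q_{\Kc\Vc}V=\Kc\Vc\Ac_{\Kc\Vc}$. As a sanity check, $(\Vc\Ac_{\Vc})^\top(\Kc\Vc\Ac_{\Kc\Vc})=U^\top MV=\Sigma$, and both families are orthonormal because $U$ and $V$ have orthonormal columns.

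The main obstacle is the rank-deficient case, which is where the identification $\text{span}(Q_{\Vc})=\Sc$ has to be handled with care. When $r_{\Vc}<s$, $R_{\Vc}^\dagger$ is only a pseudo-inverse, so one must confirm that $\Vc R_{\Vc}^\dagger$ loses no directions of $\Sc$. I would do this through the dimension count above. An alternative is to show directly that $\Vc=Q_{\Vc}R_{\Vc}$. This works because the residual $\Vc(I-R_{\Vc}^\dagger R_{\Vc})$ has Gram matrix $(I-P)M_{\Vc}(I-P)=0$, where $P=R_{\Vc}^\dagger R_{\Vc}$ is the orthogonal projector onto $\mathcal{R}(\tilde V_{\Vc})$, which is the range of $M_{\Vc}$. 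Hence that residual is the zero function.
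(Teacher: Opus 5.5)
Your proposal is correct and follows essentially the paper's route. Both arguments use the implicit orthonormal bases $Q_{\Vc}=\Vc R_{\Vc}^{\dagger}$ and $Q_{\Kc\Vc}=\Kc\Vc R_{\Kc\Vc}^{\dagger}$ to reduce the RKHS problem to the SVD of $M$ via \cite[Theorem~1]{AB-GHG:73}, then map $U$ and $V$ back to $\Vc\Ac_{\Vc}$ and $\Kc\Vc\Ac_{\Kc\Vc}$. The paper justifies the transfer through the explicit coordinate isometry $u=Q_{\Vc}\alpha\mapsto\alpha$, whereas you also check that these families actually span $\Sc$ and $\Kc\Sc$ in the rank-deficient case, either by the dimension count or by showing the residual $\Vc(I-R_{\Vc}^{\dagger}R_{\Vc})$ has zero Gram matrix; the paper simply asserts this spanning property.
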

\smallskip
\begin{proof}
  Because $Q_{\Vc} = \Vc R_{\Vc}^{\dagger}$ and
  $Q_{\Kc\Vc} = \Kc \Vc R_{\Kc \Vc}^{\dagger}$ are orthonormal bases
  for $\Sc$ and $\Kc\Sc$, any functions $u \in \Sc$ and $v \in \Kc\Sc$
  can be uniquely parameterized as $u = Q_{\Vc} \alpha$ and
  $v = Q_{\Kc\Vc} \beta$, for some coordinate vectors
  $\alpha \in \real^{r_{\Vc}}$ and $\beta \in
  \real^{r_{\Kc\Vc}}$.
  Crucially, the orthonormality of these bases establishes an isomorphism between the function subspaces and their respective Euclidean coordinate spaces, ensuring that $\|u\|_{\mathcal{N}} = \|\alpha\|_2$ and $\|v\|_{\mathcal{N}} = \|\beta\|_2$.
  %
  %\marginJC{Here you've fully changed from the functions to %vectors
  %  terminology, but I don't know that we've been explicit %about the
  %  isomorphism with the reader earlier.}
  %\marginDS{The function vs vector dilemma has been addressed %in the notation section. I have explicitly added how the %isomorphisms are defined now.}
  %
  Computing the principal angles requires sequentially maximizing the
  inner product $\langle u, v \rangle_{\Nc}$ subject to unit norm and
  orthogonality constraints,
  cf. Section~\ref{sec:principal-angles}. Substituting the coordinate
  representations, 
  $$
  \langle u, v \rangle_{\Nc} = \langle Q_{\Vc} \alpha, Q_{\Kc\Vc}
  \beta \rangle_{\Nc} = \alpha^{\top} M \beta.
  $$ 
  Therefore, the infinite-dimensional optimization over $\Sc$ and
  $\Kc\Sc$ is mathematically equivalent to:
  $$
  \max_{\alpha, \beta} \alpha^{\top} M \beta \quad \text{subject to}
  \quad \|\alpha\|_2 = 1, \|\beta\|_2 = 1,
  $$
  with subsequent orthogonality constraints
  $\alpha_i^\top \alpha_j = \delta_{ij}$ and
  $\beta_i^\top \beta_j = \delta_{ij}$. By
  \cite[Theorem~1]{AB-GHG:73}, the optimal coordinate vectors solving
  this sequence of Euclidean problems are exactly the columns of $U$
  and $V$ from the SVD $M = U \Sigma V^\top$, and the maximized inner
  product values are the singular values $\cos \theta_i =
  \sigma_i$. Mapping these optimal coordinates back into the RKHS
  yields the functional principal vectors $\Uc^{\Sc} = Q_{\Vc} U$ and
  $\Kc \Vc^{\Kc \Sc} = Q_{\Kc \Vc} V$.
\end{proof}
\smallskip

\begin{algorithm}[htb]
  \caption{Exact Principal Angles and Vectors in RKHS}
  \label{alg:exact_kernel_pa_pv}
  \begin{algorithmic}[1]
    \REQUIRE Data matrices $X, T(X) \in \mathbb{R}^{n \times N}$, Dictionary coefficient matrix $W_{\mathcal{V}} \in \mathbb{R}^{N \times s}$, Kernel function $k(\cdot, \cdot)$
    \ENSURE Exact principal angles $\{\theta_i\}_{i=1}^{k_r}$, Principal vector
    coefficient matrices $A_{\mathcal{V}}, A_{\mathcal{K}\mathcal{V}}$ 
    
    \vspace{0.1cm}
    
    \STATE Compute $K_{X,X}, K_{T(X),X} \in \mathbb{R}^{N \times N}$
    using~\eqref{eqn:kernel_matrix_defn}
    
    \vspace{0.1cm}
    
    \STATE Compute
    $W_{\mathcal{K}\mathcal{V}} \in \mathbb{R}^{N \times s}$ by
    solving \eqref{eq:W_KV_eqn}
    
    \vspace{0.1cm}
    
    \STATE Construct the Gram matrices
    $M_{\mathcal{V}}, M_{\mathcal{K}\mathcal{V}}, M_{\text{cross}}$ in
    \eqref{eqn:Gram_matrices}
    
    \vspace{0.1cm}
    
    \STATE Compute the rank-revealing eigendecompositions
    \eqref{eqn:eigen_decomp} 
    
    \vspace{0.1cm}
    
    \STATE Form the \texttt{QR} factors $R_{\mathcal{V}}$ and
    $R_{\mathcal{K}\mathcal{V}}$ using \eqref{eqn:R_terms_eigen} 
    \vspace{0.1cm}
    
    \STATE Compute the cosine matrix $M =
    (R_{\mathcal{V}}^{\dagger})^\top M_{\text{cross}}
    R_{\mathcal{K}\mathcal{V}}^\dagger$ 
    
    \vspace{0.1cm}
    
    \STATE Compute the \texttt{SVD} of the cosine matrix: $M = U \Sigma V^\top$
    
    \vspace{0.1cm}
    
    \STATE Extract the principal arguments using Theorem
    \ref{thm:pa_pv_rkhs}:
    \\
    \hspace{0.2cm} $\cos \theta_i = \Sigma_{i,i}$ 
    \hspace{0.2cm} $A_{\mathcal{V}} = R_{\mathcal{V}}^\dagger U$ 
    \hspace{0.2cm} $A_{\mathcal{K}\mathcal{V}} = R_{\mathcal{K}\mathcal{V}}^\dagger V$
    
    \vspace{0.1cm}       
    
    \RETURN $\{\theta_i\}_{i=1}^{k_r}$, $A_{\mathcal{V}}$, $A_{\mathcal{K}\mathcal{V}}$
  \end{algorithmic}
\end{algorithm}
%
%\marginJC{$i$ in step 8 runs from 1 to $k$? The current th %statement
%  does not quite express $A_{\mathcal{V}}$,
%  $A_{\mathcal{K}\mathcal{V}}$ this explicitly.}
%\marginDS{Resolved}
%

%%
%\marginJC{I'm confused, are the principal vectors $A_{\mathcal{V}}$,
%  $A_{\mathcal{K}\mathcal{V}}$, which is what the algorithm seems to
%  say, or $\Uc^{\Sc}$ and $\Kc \Vc^{\Kc \Sc}$, which is what the
%  Theorem says??}
%\marginDS{Resolved. Had a problem in the Theorem statement. %Fixed now.}  
%
Algorithm~\ref{alg:exact_kernel_pa_pv} summarizes the overall
procedure for computing the exact principal angles and vectors in the
RKHS.  The computational complexity of
Algorithm~\ref{alg:exact_kernel_pa_pv} is dominated by solving for
$W_{\Kc \Vc}$ in Step 2, which is $\mathcal{O}(N^3)$, and this
motivates the need for scalable approximations that we discuss in the
next section.

\section{Scalable Computation via the Nystr\"om Approximation}\label{sec:nystrom_approximation}

To overcome the $\mathcal{O}(N^3)$ computational bottleneck associated with the dense kernel matrix $K_{X,X}$, we employ the Nystr\"om approximation \cite{CW-MS:00}. The Nystr\"om method provides a data-dependent, low-rank approximation of the Gram matrix that inherently adapts to the geometry of the underlying state manifold, thereby improving the sample efficiency of the approximation. 

The procedure begins by sampling a small subset of $D \ll N$ ``landmark'' data points from the full trajectory dataset. These can be uniformly sampled to capture the diversity of the state space, or selected via clustering methods to identify representative states. Let $L = [l_1 \, l_2 \, \cdots \, l_D] \in \mathbb{R}^{n \times D}$ denote this collection of landmarks. We then partition the kernel evaluations to define the landmark kernel matrix $K_{L,L} \in \mathbb{R}^{D \times D}$ and the cross-kernel matrix $K_{L,X} \in \mathbb{R}^{D \times N}$, which relates the landmarks to the entire dataset. The Nystr\"om method approximates the exact kernel matrix as:
\begin{equation}
  K_{X,X} \approx \tilde{K}_{X,X} = K_{X,L} K_{L,L}^{\dagger} K_{L,X}.
\end{equation}

To seamlessly integrate this low-rank approximation into our proposed subspace analysis, we construct an explicit, finite-dimensional feature map $\psi: \mathbb{R}^n \to \mathbb{R}^D$. Consider the symmetric eigendecomposition of the strictly positive definite landmark matrix:
\begin{equation}
    K_{L,L} = U_L \Lambda_L U_L^\top ,
\end{equation}
where $\Lambda_L \in \mathbb{R}^{D \times D}$ is the diagonal matrix of positive eigenvalues, and $U_L \in \mathbb{R}^{D \times D}$ contains the corresponding orthonormal eigenvectors. Using these spectral factors, we define the explicit feature map $\psi(x)$ for any state $x \in \mathcal{X}$ as:
\begin{equation}\label{eq:nystrom_map}
  \psi(x) = \Lambda_L^{-1/2} U_L^\top k_L(x) \in \mathbb{R}^{D} ,
\end{equation}
where $k_L(x) = [k(l_1, x), \, \dots, \, k(l_D, x)]^\top \in \mathbb{R}^D$ is the column vector of kernel evaluations between the state $x$ and the landmark points. 

By applying this data-driven mapping to the entire trajectory dataset $X$, we form the explicit data matrix $\Psi(X) = [\psi(x_1) \, \psi(x_2) \, \cdots \, \psi(x_N)]$. This matrix can be computed efficiently via a direct matrix multiplication:
\begin{equation}
    \Psi(X) = \Lambda_L^{-1/2} U_L^\top K_{L,X} \in \mathbb{R}^{D \times N} .
\end{equation}
Consequently, the exact $N \times N$ kernel matrix is approximated by the inner product of our new feature matrices, i.e. $K_{X,X} \approx \Psi(X)^{\top} \Psi(X)$. As $D$ increases, this approximation converges to the true kernel matrix. The error bounds for the Nystr\"om approximation are well-established (cf. \cite{AG-MM:13}), and the method performs particularly well when the exact kernel matrix exhibits rapidly decaying eigenvalues---a condition frequently met in practice due to the smoothness of the underlying dynamics and the selected kernel.

Although this approach introduces an approximation error relative to the true RKHS inner product $\langle \cdot, \cdot \rangle_{\mathcal{N}}$, it elegantly resolves the scalability challenge associated with constructing and inverting the full kernel matrix $K_{X,X}$. Recall that in Section \ref{sec:exact_RKHS_PA}, computing $W_{\Kc \Vc}$ created a significant computational bottleneck. By employing the Nystr\"om feature map, we can instead operate directly on the explicit feature matrices $\Psi(X), \Psi(T(X)) \in \mathbb{R}^{D \times N}$, entirely bypassing the computation of $W_{\Kc \Vc}$.

We adapt our principal angle and vector computations to this new feature space through the following procedure. First, we construct finite-dimensional target matrices $Z_{\Vc}, Z_{\Kc \Vc} \in \mathbb{R}^{D \times s}$ designed to approximate the true Gram matrices such that $M_{\Vc} \approx Z_{\Vc}^\top Z_{\Vc}$ and $M_{\Kc \Vc} \approx Z_{\Kc \Vc}^\top Z_{\Kc \Vc}$. Performing rank-revealing \texttt{QR} decompositions on these targets, $Z_{\Vc} = Q_{Z_{\Vc}} \tilde{R}_{\Vc}$ and $Z_{\Kc \Vc} = Q_{Z_{\Kc \Vc}} \tilde{R}_{\Kc \Vc}$, yields the approximate factorizations $M_{\Vc} \approx \tilde{R}_{{\Vc}}^\top \tilde{R}_{\Vc}$ and $M_{\Kc \Vc} \approx \tilde{R}_{\Kc \Vc}^\top \tilde{R}_{\Kc \Vc}$. Applying Lemma \ref{lemma:qr_decompositions}, we then establish the approximate orthonormal bases $\tilde{Q}_{\Vc} = \Vc \tilde{R}_{\Vc}^{\dagger}$ and $\tilde{Q}_{\Kc \Vc} = \Kc \Vc \tilde{R}_{\Kc \Vc}^{\dagger}$. Finally, mirroring Theorem \ref{thm:pa_pv_rkhs}, the principal angles and vectors are extracted via the SVD of the approximate cosine matrix $\tilde{M} = \tilde{Q}_{\Vc}^\top \tilde{Q}_{\Kc \Vc}$.

The success of this procedure hinges on two critical steps: accurately constructing the target matrices $Z_{\Vc}$ and $Z_{\Kc \Vc}$, and correctly matching the rank of the truncated \texttt{QR} decompositions to the true ranks of the underlying Gram matrices. We address the former through a specialized target matrix design, and the latter via data-dependent singular value thresholding, both detailed in the subsequent subsections.

\subsection{Designing Target Matrices for Nystr\"om Approximation}

For the base dictionary $\mathcal{V}$, the target matrix is straightforwardly defined as $Z_{\mathcal{V}} = \Psi(X) W_{\mathcal{V}}$. By construction, this matrix satisfies $Z_{\mathcal{V}}^\top Z_{\mathcal{V}} = W_{\mathcal{V}}^\top \Psi(X)^\top \Psi(X) W_{\mathcal{V}} \approx W_{\Vc}^{\top} K_{X,X} W_{\Vc} = M_{\Vc}$.

Designing the target matrix $Z_{\Kc \Vc}$ for the Koopman image subspace $\mathcal{K}\mathcal{V}$ is more subtle. Our objective is twofold: to approximate the Gram matrix $M_{\Kc \Vc}$ without explicitly computing $W_{\Kc \Vc}$ via the full kernel matrix, and to ensure robustness against the potential rank deficiency of $M_{\Kc \Vc}$. Recall that the exact coefficient matrix $W_{\Kc \Vc}$ satisfies:
\begin{align*}
K_{X,X} W_{\Kc \Vc} &= K_{T(X),X} W_{\Vc}\\
\Rightarrow \Psi(X)^{\top} \Psi(X) W_{\Kc \Vc} &\approx \Psi(T(X)) ^{\top} \Psi(X) W_{\Vc}.
\end{align*}
Instead of solving for $W_{\Kc \Vc}$ and subsequently forming the product $\Psi(X) W_{\Kc \Vc}$, we define $Z_{\Kc \Vc}$ directly as the Tikhonov-regularized solution \cite{ANT-AVY:77} to the approximate least-squares problem $\Psi(X)^{\top} Z_{\Kc \Vc} \approx \Psi(T(X)) ^{\top} \Psi(X) W_{\Vc}$. Substituting $Z_{\Vc} = \Psi(X) W_{\Vc}$, this yields:
\begin{align}\label{eqn:Z_KV_tikhonov}
Z_{\Kc \Vc} \!=\! (\Psi(X) \Psi(X)^{\top} \!+\! \lambda I)^{-1} \Psi(X)\Psi(T(X)) ^{\top} Z_{\Vc},
\end{align}
where $\lambda > 0$ is a regularization parameter that we will discuss
shortly.  The regularized formulation \eqref{eqn:Z_KV_tikhonov} is a
critical design choice that addresses both computational efficiency
and numerical stability. First, it only requires inverting
$\Psi(X) \Psi(X)^{\top} \in \real^{D \times D}$, elegantly bypassing
the massive computational bottleneck of inverting the full kernel
matrix $K_{X,X}$. Second, the regularization parameter $\lambda > 0$
stabilizes the inversion against the inherent rank deficiency of the
subspace. By damping near-zero singular values, Tikhonov
regularization prevents the artificial amplification of numerical
noise, ensuring that the inner product $Z_{\Kc \Vc}^\top Z_{\Kc \Vc}$
provides a robust and reliable geometric approximation of the true
Gram matrix $M_{\Kc \Vc}$.
%
% \marginJC{I'm not familiar with this ``stabilized'' terminology. Is
% it coming from numerical analysis?}  \marginDS{Yes. Basically, we do
% not amplify numerical noise. If we did not
% add %the Tikhonov regularizer, we would be amplifying noise because $M_KV$ is not %usually full rank.}
%

With the target matrices $Z_{\Vc}$ and $Z_{\Kc \Vc}$ defined, we must
address the rank mismatch inherent to the Nystr\"om
approximation. While the true Gram matrices $M_{\Vc}$ and
$M_{\Kc \Vc}$ possess exact mathematical ranks $r_{\Vc}$ and
$r_{\Kc \Vc}$, their empirical counterparts $Z_{\Vc}$ and
$Z_{\Kc \Vc}$ are generically full rank. For $Z_{\Vc}$ and
$Z_{\Kc \Vc}$, this rank inflation occurs because the Nystr\"om method
approximates the geometry of the full dataset using only $D$
landmarks, leaving an approximation residual when extrapolating the
feature map to the remaining data via
\eqref{eq:nystrom_map}. Furthermore for $Z_{\Kc \Vc}$, the Tikhonov
regularization ($\lambda I$) explicitly perturbs the spectrum to
ensure inversion stability.

To prevent the subsequent \texttt{QR} decompositions from spanning these artifactual dimensions---which would severely corrupt the principal angle computations---we apply rank-separation thresholds via truncated SVD \cite{GHG-CFVL:13}. This involves performing SVDs of the target matrices $Z_{\Vc}$ and $Z_{\Kc \Vc}$, and retaining only the singular values that exceed data-dependent thresholds $\tau_{\Vc}(D)$ and $\tau_{\Kc \Vc}(D)$, respectively. This yields
\begin{subequations}\label{eqn:trunc_SVD_Z_V_Z_KV}
  \begin{align}
    Z_{\mathcal{V}}^{\mathrm{trunc}}
    &=
      U_{Z_{\mathcal{V}}}^{\mathrm{trunc}}\,
      \Sigma_{Z_{\mathcal{V}}}^{\mathrm{trunc}}
      \,
      (V_{Z_{\mathcal{V}}}^{\mathrm{trunc}})^\top ,
    \\
    Z_{\mathcal{K}\mathcal{V}}^{\mathrm{trunc}}
    &=
      U_{Z_{\mathcal{K}\mathcal{V}}}^{\mathrm{trunc}}
      \,
      \Sigma_{Z_{\mathcal{K}\mathcal{V}}}^{\mathrm{trunc}}
      \,
      (V_{Z_{\mathcal{K}\mathcal{V}}}^{\mathrm{trunc}})^\top  .
  \end{align}
\end{subequations}
In practice, these thresholds act as tuning parameters set to constant
multiples of $D^{-1/2}$, depending on the decay rate of the kernel
spectrum. For the remainder of this paper, we assume these thresholds
are appropriately tuned such that the ranks of the truncated target
matrices perfectly match the true ranks of their respective Gram
matrices.%
%\marginJC{Some refs for terminology (``stabilized'', ``elbow'') would
%  be helpful here.}
%\marginDS{This is basically plotting the singular values and looking at a clear separation. However, in my simulations, that is insufficent. It's more of a trial and error game. For this work, I am leaving it at "these need to be tuned".}
%anchor

\subsection{Approximate Principal Angles and Vectors}

Next, we extract the upper-triangular matrices $\tilde{R}_{\Vc} = \Sigma_{Z_{\mathcal{V}}}^{\mathrm{trunc}} (V_{Z_{\mathcal{V}}}^{\mathrm{trunc}})^\top$ and $\tilde{R}_{\Kc \Vc} = \Sigma_{Z_{\mathcal{K}\mathcal{V}}}^{\mathrm{trunc}} (V_{Z_{\mathcal{K}\mathcal{V}}}^{\mathrm{trunc}})^\top$ from the truncated SVDs of the target matrices in \eqref{eqn:trunc_SVD_Z_V_Z_KV}. These stable factors act as regularized \texttt{QR} components. They allow us to define approximately orthonormal bases for our subspaces:
\[
  \tilde{Q}_{\mathcal{V}} = \mathcal{V} \tilde{R}_{\Vc}^\dagger, \quad 
  \tilde{Q}_{\mathcal{K}\mathcal{V}} = \mathcal{K}\mathcal{V} \tilde{R}_{\Kc \Vc}^\dagger.
\]

Crucially, we do not need to explicitly compute $\tilde{Q}_{\mathcal{K}\mathcal{V}}$. This would be impossible since the true Koopman image $\mathcal{K}\mathcal{V}$ is unknown. Instead, this implicit representation enables us to compute all necessary inner products algebraically.

Finally, we compute the principal angles and vectors that characterize the invariance proximity of the Koopman subspace. We first form the approximate cross-Gram matrix $\tilde{M}_{\mathrm{cross}} = W_{\mathcal{V}}^\top \Psi(T(X))^\top \Psi(X) W_{\mathcal{V}}$. Following Theorem \ref{thm:pa_pv_rkhs}, this allows us to construct the computable cosine matrix entirely from known quantities:
\begin{equation}
  \tilde{M} = \left( \tilde{R}_{\Vc}^\dagger \right)^\top \tilde{M}_{\mathrm{cross}} \tilde{R}_{\Kc \Vc}^\dagger.
  \label{eqn:M_tilde_defn}  
\end{equation}

By computing the SVD of $\tilde{M} = \tilde{U} \tilde{\Sigma} \tilde{V}^\top$, we extract the approximate principal angles $\cos \tilde{\theta}_i = \tilde{\Sigma}_{i,i}$ for $i = 1, \ldots, \tilde{k}_r$. Furthermore, the approximate principal vectors $\tilde{\Uc}^{\Sc} = \Vc \tilde{\Ac}_{\Vc}$ and $\Kc \tilde{\Vc}^{\Kc \Sc} = \Kc \Vc \tilde{\Ac}_{\Kc \Vc}$ are parameterized by the explicit coefficient matrices:
\begin{equation}
  \tilde{A}_{\mathcal{V}} = (\tilde{R}_{\Vc})^{\dagger} \tilde{U}, \quad
  \tilde{A}_{\mathcal{K}\mathcal{V}} = (\tilde{R}_{\Kc \Vc})^{\dagger} \tilde{V},
  \label{eqn:approx_principal_vecs}
\end{equation}
which efficiently reconstruct the principal vectors in the coordinates of the original dictionaries.

%
%\marginJC{Same question I had earlier for Algo~2. Which ones are the
%  principal vectors, $ \tilde{A}_{\mathcal{V}}$ or
%  $\tilde{\Uc}^{\Sc}$?}
%\marginDS{They are $\tilde{\Uc}^{\Sc}$ and $\Kc \tilde{\Vc}^{\Kc \Sc}%$. The $\tilde{A}$ matrices are just the coefficients for the principal %vectors in the original dictionary coordinates. I have updated the statement %to make this more clear.}
%

Algorithm~\ref{alg:approx_kernel_pa_pv} summarizes the complete Nystr\"om-based computational routine. By operating entirely within the explicit feature space rather than evaluating exact RKHS inner products, this approach provides a highly scalable and numerically robust mechanism for evaluating Koopman subspace pruning geometries on massive trajectory datasets.

%anchor
\begin{algorithm}[htb]
  \caption{Approximate Principal Angles and Vectors}
  \label{alg:approx_kernel_pa_pv}
  \begin{algorithmic}[1]
    \REQUIRE Data matrices $X, T(X) \in \mathbb{R}^{n \times N}$,
    Subspace dictionary coefficient matrix $W_{\mathcal{V}}$, Number
    of random features $D$, Thresholds
    $\tau_{\mathcal{V}}(D),
    \tau_{\mathcal{K}\mathcal{V}}(D)$
    \ENSURE Approximate principal angles $\{\tilde{\theta}_i\}_{i=1}^{\tilde{k}_r}$,
    Approximate principal vector coefficient matrices
    $\tilde{A}_{\mathcal{V}},
    \tilde{A}_{\mathcal{K}\mathcal{V}}$

    \vspace{0.1cm}
    
    \STATE Compute feature matrices $\Psi(X), \Psi(T(X)) \in \mathbb{R}^{D \times N}$
    
    \vspace{0.1cm}
    
    \STATE Compute the target matrices $Z_{\mathcal{V}}, Z_{\mathcal{K}\mathcal{V}}$
            
    \vspace{0.1cm}
    
    \STATE $Z^{\mathrm{trunc}}_{\mathcal{V}}$ $\gets$ T-SVD of 
    $Z_{\mathcal{V}}$ with $\tau_{\mathcal{V}}(D)$
    
    \vspace{0.1cm}
    
    \STATE $Z^{\mathrm{trunc}}_{\Kc \mathcal{V}}$ $\gets$ T-SVD of 
    $Z_{\mathcal{K}\mathcal{V}}$ with $\tau_{\mathcal{K}\mathcal{V}}(D)$
    
    \vspace{0.1cm}
    
    \STATE Compute $\tilde{R}_{\Vc}$ and $\tilde{R}_{\Kc \Vc}$ according to \eqref{eqn:trunc_SVD_Z_V_Z_KV}
    
    \vspace{0.1cm}
    
    \STATE Compute $\tilde{M}_{\mathrm{cross}} = W_{\mathcal{V}}^\top \Psi(T(X))^\top \Psi(X) W_{\mathcal{V}}$.
    
    \vspace{0.1cm}
    
    \STATE Compute SVD of \eqref{eqn:M_tilde_defn}: $\tilde{M} = \tilde{U} \tilde{\Sigma} \tilde{V}^\top$
    
    \vspace{0.1cm}
    
    \STATE $\cos \tilde{\theta}_i = \tilde{\Sigma}_{i,i}$, 
    $\tilde{A}_{\mathcal{V}} = (\tilde{R}_{\Vc})^{\dagger} \tilde{U}$, $\tilde{A}_{\mathcal{K}\mathcal{V}} = (\tilde{R}_{\Kc \Vc})^{\dagger} \tilde{V}$
    
    \vspace{0.1cm}       
    
    \RETURN $\{\tilde{\theta}_i\}_{i=1}^{\tilde{k}_r}$, $\tilde{A}_{\mathcal{V}}$, $\tilde{A}_{\mathcal{K}\mathcal{V}}$
  \end{algorithmic}
\end{algorithm}

\begin{remark}\longthmtitle{Computational Complexity of Algorithm~\ref{alg:approx_kernel_pa_pv}}
  The computational cost of Algorithm~\ref{alg:approx_kernel_pa_pv} is
  dominated by the $\mathcal{O}(D^3)$ eigendecompositions and
  inversions of the landmark matrices, alongside the
  $\mathcal{O}(NDs)$ dense matrix multiplications required to
  construct the feature targets. Assuming $s \le D \ll N$, the overall
  asymptotic complexity is strictly bounded by
  $\mathcal{O}(NDs + D^3)$. This Nystr\"om-based dimensionality
  reduction effectively replaces the prohibitive $\mathcal{O}(N^3)$
  scaling of the exact formulation with a highly tractable linear
  dependence on the dataset size $N$. \oprocend
\end{remark}

\section{Simulation Results}

Having established the procedures for computing principal arguments
within the RKHS, we are now positioned to implement the SPV pruning
strategy. Hereafter, we refer to the exact computational routine as
Kernel-SPV and its Nystr\"om-based counterpart as Approximate
Kernel-SPV.  In this section, we validate the computation of
approximate principal arguments and demonstrate the efficacy of our
pruning strategies using the damped Duffing oscillator. Simulations
were performed in Python 3.11.4 (Apple M1 Pro, 16 GB RAM). With a time
step $\Delta_t = 0.01$, the discretized dynamics are:
\begin{subequations}\label{eq:sys1}
  \begin{align}
    x_{1}^+ &= x_1 + \Delta_t x_2, \\
    x_2^+ &= x_2 + \Delta_t (x_1 - 3 x_1^3).
  \end{align}
\end{subequations}

%To benchmark numerical accuracy, we simulate $500$ time steps from
%$10$ uniformly sampled initial conditions in $[-2,2]^2$, yielding
%$N = 5000$ points. 
To benchmark numerical accuracy, we generate the dataset $X \in \real^{2 \times N}$ by uniformly sampling $N = 5000$ points in $[-2,2]^2$. This dataset size is deliberately chosen because it remains computationally tractable for exact principal argument computations (Section \ref{sec:exact_RKHS_PA}). For the RKHS, we select a compactly supported Wendland kernel \cite{HW:04} with a smoothness parameter of $\beta = 2$. We then construct a base dictionary $\Vc$ of dimension $s=200$, comprising empirical kernel sections whose centers are chosen by randomly sampling points from the dataset $X$. 

To evaluate the scalability and accuracy of our Nystr\"om approximation, we vary the number of landmark samples, $D \in \{800, 1000, 2000, 3000, 4000\}$. The samples are uniformly drawn from the dataset $X$. For each $D$, we compute the orthonormality residuals:
\begin{align*} 
  \epsilon_{\mathcal{V}} &= \| \tilde{Q}_{\mathcal{V}}^{\top} \tilde{Q}_{\mathcal{V}} - I \|_2 = \|(\tilde{R}_{\mathcal{V}}^\dagger)^{\top} M_{\mathcal{V}} \tilde{R}_{\mathcal{V}}^\dagger - I\|_2, \\ 
  \epsilon_{\mathcal{K} \mathcal{V}} &= \| \tilde{Q}_{\mathcal{K} \mathcal{V}}^{\top} \tilde{Q}_{\mathcal{K} \mathcal{V}} - I \|_2 = \|(\tilde{R}_{\mathcal{K} \mathcal{V}}^\dagger)^{\top} M_{\mathcal{K} \mathcal{V}} \tilde{R}_{\mathcal{K} \mathcal{V}}^\dagger - I\|_2, 
\end{align*}
where $M_{\mathcal{V}}$ and $M_{\mathcal{K} \mathcal{V}}$ are the exact Gram matrices. By quantifying how far the approximate bases deviate from true orthonormality under the RKHS inner product, these residuals directly measure the approximation error in the principal argument computations. The results are summarized in the left plot of Figure \ref{fig:principal_angles}. We see thati ncreasing the number of Nystr\"om samples $D$ steadily decreases the orthonormality residuals $\epsilon_{\mathcal{V}}$ and $\epsilon_{\mathcal{K} \mathcal{V}}$, confirming an improvement in approximation quality. 

Next, we apply both the exact Kernel-SPV and the Approximate Kernel-SPV methods to prune the target subspace $\Sc = \text{span}(\Vc)$. To establish a rigorous benchmark for our scalable approach, we compute the true principal angles of the subspaces identified by the approximate method using the exact RKHS routine. The results, summarized in the right plot of Figure \ref{fig:principal_angles}, demonstrate that with $D = 2000$ Nystr\"om samples, the output of the Approximate Kernel-SPV closely matches the exact baseline. Notably, even though the orthonormality residuals $\epsilon_{\mathcal{V}}$ and $\epsilon_{\mathcal{K} \mathcal{V}}$ remain relatively high at this sample size, the method still successfully extracts the same approximately invariant subspace. This confirms that the Nystr\"om-based approximation efficiently captures the essential subspace geometry required for pruning, well before the empirical bases achieve perfect numerical orthonormality.
%
%\marginJC{It wouldn't hurt (here, or probably earlier in this %section)
%  to remind the briefly remind the reader that by exact %Kernel-SPV we
%  mean the pruning procedure described in Section III together %with
%  Algorithm 1 for principal arguments computation. And %approximate the
%  same procedure but with Algorithm 2.}
%\marginDS{Pending}
%

\begin{figure}[htb!]
  \centering
  \begin{subfigure}[b]{0.49\linewidth}
    \centering
    \includegraphics[width=\linewidth]{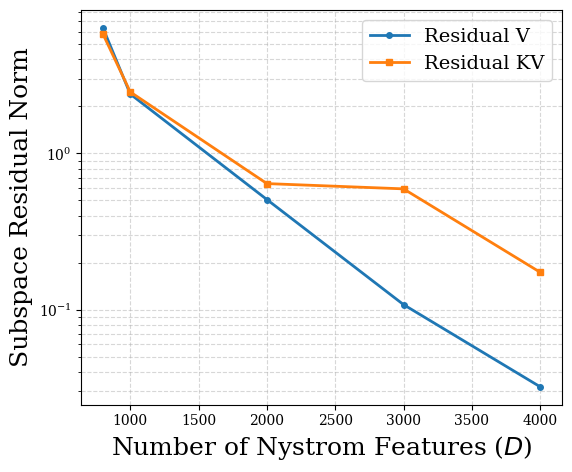}
    \label{fig:exact_angles}
  \end{subfigure}
  \hfill
  \begin{subfigure}[b]{0.49\linewidth}
    \centering
    \includegraphics[width=\linewidth]{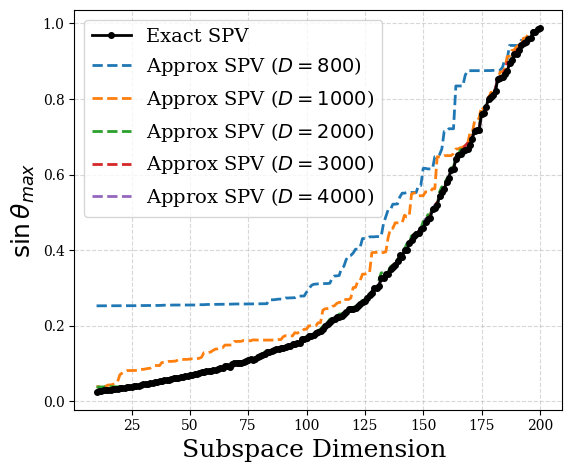}
    \label{fig:approx_angles}
  \end{subfigure}
    \vspace*{-5ex}
  \caption{(Left) Residuals of the approximate orthonormal bases for $\mathcal{V}$ and $\mathcal{K}\mathcal{V}$ as a function of the number of Nystr\"om samples $D$. (Right) Principal angles computed by the exact Kernel-SPV method and the approximate method.}
  \label{fig:principal_angles}
\end{figure}
%
%\marginJC{In the figure on the right, I'd eliminate the dots in the
%  black curve (right now, it short of obscures seeing the other plots well). }
%\marginDS{Resolved. Also changed some parameters to see more %interesting behavior.}
%

In order to demonstrate the practical utility of our pruning strategy, we plot the $5$-step prediction error of the leading Koopman eigenfunction before and after pruning. This corresponds to the eigenfunction with eigenvalue $\lambda \approx 1$, which is the most critical for long-term predictions. The relative prediction error is computed as $| \Kc^5 \phi - \lambda^5 \phi |$ for each point in the dataset, where $\phi$ is the estimated eigenfunction, normalized so that the maximum absolute value of the eigenfunction on the dataset is $1$. 
We utilize $D = 2000$ Nystr\"om samples for the Approximate Kernel-SPV pruning procedure to obtain the pruned subspace of dimension $s^* = 5$. The results are summarized in Figure \ref{fig:lifted_state_predictions}.

\begin{figure}[htb!]
  \centering
  \begin{subfigure}[b]{0.49\linewidth}
    \centering
    \includegraphics[width=\linewidth]{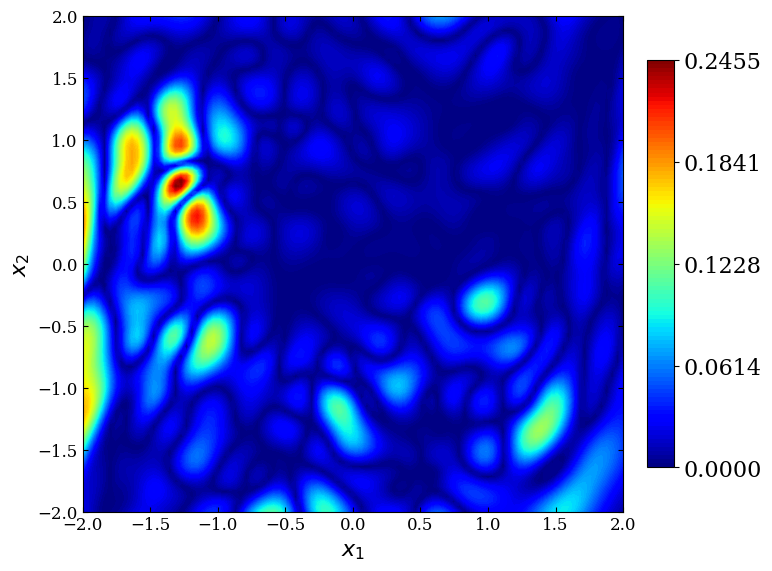}
    \label{fig:unpruned_error}
  \end{subfigure}
  \hfill
  \begin{subfigure}[b]{0.49\linewidth}
    \centering
    \includegraphics[width=\linewidth]{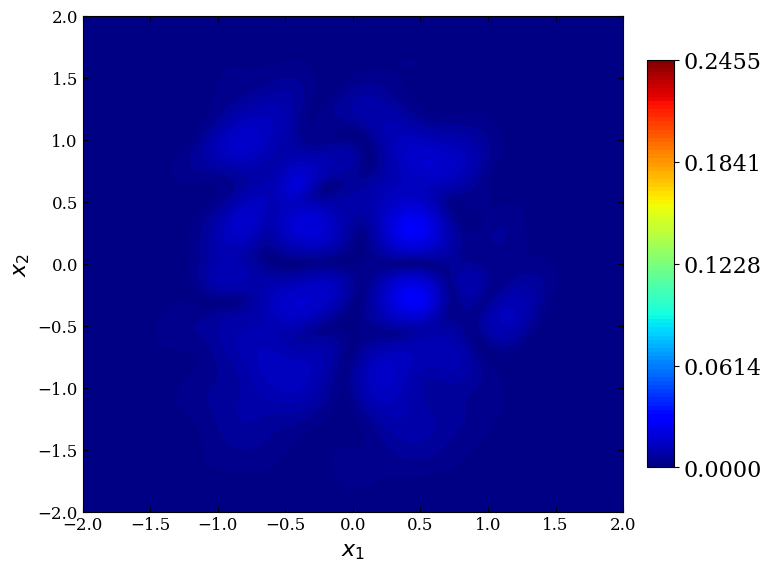}
    \label{fig:pruned_error}
  \end{subfigure}
      \vspace*{-5ex}
  \caption{Relative prediction error $| \Kc^5 \phi - \lambda^5 \phi | $ for the
    estimated eigenfunction $\phi$ corresponding to
    $\lambda \approx 1$, obtained from Kernel EDMD (left) and after
    pruning with Approximate Kernel-SPV (right).}
  \label{fig:lifted_state_predictions}
\end{figure}
%
%\marginJC{In caption, relative prediction error?}
%\marginDS{Resolved.}
%

\vspace*{-2ex}
\section{Conclusions}
We have introduced a rigorous, data-driven framework for Koopman
subspace pruning within an RKHS. We derived an exact computational
routine to calculate principal angles and vectors. To resolve the
computational bottleneck associated with massive trajectory datasets,
we integrated a scalable Nystr\"om approximation. Together, these
methods establish the Kernel-SPV and Approximate Kernel-SPV
algorithms, offering robust solutions for targeted subspace
refinement. Simulations on the undamped Duffing oscillator validate the
numerical accuracy of our Nystr\"om-based approach and demonstrate
that this pruning strategy systematically improves observable
predictions and eigenfunction approximations. Future work will focus on deriving prediction error
bounds based on the computed principal angles and developing more
sophisticated pruning strategies.

\bibliographystyle{ieeetr}
\bibliography{../bib/alias,../bib/Main-add,../bib/Main,../bib/JC}

\end{document}